\definecolor{purple}{rgb}{0.5, 0.0, 0.5}
\definecolor{dark_green}{rgb}{0.0, 0.5, 0.0}
\newcommand{\dgreen}{\color{dark_green}}
\newcommand{\blue}{\color{blue}}
\newcommand{\cyan}{\color{cyan}}
\newcommand{\purple}{\color{purple}}
\newcommand{\I}{\mathcal{I}}
\newcommand{\J}{\mathcal{J}}
\newcommand{\xb}{\bold{x}}
\newcommand{\D}{\mathcal{D}}
\newcommand{\T}{\mathcal{T}}
\newcommand{\Z}{\mathbb{Z}}
\newcommand{\R}{\mathbb{R}}
\newcommand{\C}{\mathbb{C}}
\newcommand{\E}{\mathbb{E}}
\newcommand{\N}{\mathbb{N}}
\newcommand{\Q}{\mathbb{Q}}
\newcommand{\gb}{\bold{g}}
\newcommand{\Ab}{\bold{A}}
\newcommand{\Bb}{\bold{B}}
\newcommand{\ab}{\bold{a}}
\newcommand{\eb}{\bold{e}}
\newcommand{\rt}{\tilde{r}}
\newcommand{\Cfr}{\mathfrak{C}}
\DeclarePairedDelimiter\floor{\lfloor}{\rfloor}
\newtheorem{Thm}{Theorem}
\newtheorem{Prop}[Thm]{Proposition}
\newtheorem{Lemma}[Thm]{Lemma}
\newtheorem{Def}[Thm]{Definition}
\newcommand{\ind}{\text{\color{white}.$\quad$}}
\begin{document}
\title{Numerically Stable Binary Gradient Coding}

\author{
  \IEEEauthorblockN{{\bf Neophytos Charalambides, Hessam Mahdavifar, and Alfred O. Hero III}}
  \IEEEauthorblockA{Department of Electrical Engineering and Computer Science, University of Michigan, Ann Arbor, MI 48109\\
  Email: neochara, hessam, hero@umich.edu}
\vspace{-4mm}
}

\maketitle

\begin{abstract}
A major hurdle in machine learning is scalability to massive datasets. One approach to overcoming this is to distribute the computational tasks among several workers. \textit{Gradient coding} has been recently proposed in distributed optimization to compute the gradient of an objective function using multiple, possibly unreliable, worker nodes. By designing distributed coded schemes, gradient coded computations can be made resilient to \textit{stragglers}, nodes with longer response time compared to other nodes in a distributed network. Most such schemes rely on operations over the real or complex numbers and are inherently numerically unstable. We present a binary scheme which avoids such operations, thereby enabling numerically stable distributed computation of the gradient. Also, some restricting assumptions in prior work are dropped, and a more efficient decoding is given.
\end{abstract}


\section{Introduction}
\label{intro}

In modern day machine learning the \textit{curse of dimensionality} has been a major impediment to solving large scale problems, which require heavy computations. Recently, coding-theoretic ideas have been adopted in order to accommodate such computational tasks in a distributed manner, under the assumption that straggler workers are present \cite{lee2018speeding,li2016unified,reisizadeh2017coded,li2016coded,li2017coding,yang2017computing,lee2017high,dutta2016short,vulimiri2013low,wang2018coded,mallick2018rateless,ramamoorthy2019universally,YSRKSA18,RRG20}. Stragglers are workers whose tasks may never be received, due to delay or outage, and can significantly increase the computation time. These failures translate to erasures in the context of coding theory when the computational tasks are encoded. The authors of \cite{TLDK17} proposed \textit{gradient coding}, a scheme for exact recovery of the gradient when the objective loss function is additively separable. Gradient coding requires the central server to receive the subtasks of a fixed fraction of any of the workers. The exact recovery of the gradient is considered in several prior works, e.g., \cite{TLDK17,HASH17,RTTD17,OGU19}, while the numerical stability issue is studied in \cite{YA18}. Numerical stability for matrix multiplication is considered in \cite{FC19,SHN19}. There are also several works involving gradient coding for approximate recovery of the gradient \cite{CP18,RTTD17,CPE17,WCP19,BWE19,WLS19,KKR19,HYKM19,CHZP18,CPH20a}.

In this paper, we propose a scheme for gradient coding that is numerically stable. More specifically, the proposed scheme avoids any division or multiplication of real or complex numbers, often represented by floating point. Furthermore, the encoding matrix is binary, simplifying the encoding process. Also, the scheme is deterministic, i.e., it does not require generating random numbers. Our scheme is similar in spirit to the \textit{fractional repetition scheme} introduced in \cite{ERK10,TLDK17}, where we also drop the strict assumption that $s+1$ divides $n$, where $n$ is the number of workers and $s$ is the number of stragglers that the scheme tolerates. Additionally, the scheme we propose can also be adapted to compute matrix-matrix multiplications, and matrix inverse approximations \cite{CPH20b}. 

The main advantage of considering encoding and decoding real-valued data using binary matrices, consisting of $0$'s and $1$'s, is that it does not introduce further instability, possibly adding to the computational instability of the associated computation tasks, which was also considered for matrix-vector multiplication in \cite{JSM19}. The fact that the encoding matrix is over $\{0,1\}$ allows us to view the encoding as task assignments. This also gives a more efficient online decoding, which avoids searching through an exponentially large table in terms of $n$, as in the scheme in \cite{TLDK17}.

The paper is organized as follows. In section \ref{str_problem_GC} we overview the ``straggler problem'' in gradient coding \cite{TLDK17}. In sections \ref{bin_gc_sec} and \ref{dec_vector_sec} the proposed encoding and decoding processes are discussed, respectively. In section \ref{cl_un_ass_distr_sec} we discuss the optimality of our scheme. Finally, in section \ref{heter_case_sec} we consider scenarios with heterogeneous workers. The main contributions are:
\begin{itemize}[noitemsep,nolistsep]
  \item A new \textit{binary} gradient coding scheme --- both in the encoding and decoding, that is robust to stragglers;
  \item Elimination of the assumption $(s+1)\mid n$;
  \item Theoretically showing that perfect gradient recovery occurs, and that the unbalanced assignment is optimal assuming the homogeneous worker setting;
  \item Comparison with another binary scheme \cite{TLDK17}, justifying ours is more efficient, for large $n$;
  \item Determining task assignments for heterogeneous workers.
\end{itemize}

\section{Preliminaries}
\label{str_problem_GC}

\subsection{Straggler Problem}
\label{str_problem}

Consider a single central server that has at its disposal a dataset $\D=\left\{(\xb_i,y_i)\right\}_{i=1}^N\subsetneq \R^p\times\R$ of $N$ samples, where $\xb_i$'s represent the features and $y_i$ denotes the label of the $i$-th sample. The central server distributes the dataset $\D$ among $n$ workers to facilitate the solution of the optimization problem:
\begin{equation}
\label{th_star_pr}
  \theta^{\star} = \arg\min_{\theta\in\R^p}\left\{ \sum_{i=1}^N \ell(\xb_i,y_i;\theta) \right\}
\end{equation}
in an accelerated manner, where $L(\D;\theta)=\sum_{i=1}^N \ell(\xb_i,y_i;\theta)$ is a predetermined differentiable loss-function. The objective function in (\ref{th_star_pr}) can also include a regularizer $\mu R(\theta)$ if necessary. A common approach to solving (\ref{th_star_pr}) is to employ gradient descent. Even if closed form solutions exist for (\ref{th_star_pr}), gradient descent can still be advantageous for large $N$.

The central server is assumed to be capable of distributing the dataset appropriately, with a certain level of redundancy, in order to recover the gradient based on the full dataset $\D$. As a first step we partition $\D$ into $k$ disjoint parts $\{\D_j\}_{j=1}^k$ each of size $N/k$. The gradient is the quantity
$$ g=\nabla_{\theta}L(\D;\theta)=\sum_{j=1}^k\nabla_{\theta}\ell(\D_j;\theta)=\sum_{j=1}^k g_j. $$
We refer to the terms $g_j\coloneqq\nabla_{\theta}\ell(\D_j;\theta)$ as \textit{partial gradients}.

In the distributed setting each worker node completes its job by returning a certain encoding of its assigned partial gradients. There can be different types of failures that may occur during the computation or the communication process. These failures are what we refer to as \textit{stragglers}, which are discarded by the main server. More specifically, the server only receives $f\coloneqq n-s$ completed tasks. Let $\I\subsetneq\N_n\coloneqq\{1,\cdots,n\}$ denote the set of indices of the $f$ \textit{fastest} workers who complete their tasks. Once \textit{any} set of $f$ tasks is received, the central server should be able to decode the received encoded partial gradients and recover the full gradient $g$.

\subsection{Gradient Coding}
\label{GC}

Gradient coding, proposed in \cite{TLDK17}, is a procedure comprised of an encoding matrix $\Bb\in\R^{n\times k}$, and a decoding vector $\ab_{\I}\in\R^n$; determined by $\I$. Schemes over $\C$ are also studied in \cite{HASH17,RTTD17}. Assuming that the workers have the same computational power, the same number of tasks is assigned to each of them. However, we deviate from this restriction in this paper in order to drop the assumption $(s+1)\mid n$. Each row of $\Bb$ corresponds to an encoding vector, also regarded as a task allocation, and each column corresponds to a data partition $\D_j$.

Each worker node is assigned a number of partial gradients from the partition, indexed by $\J_i\subsetneq\N_k$. The workers are tasked to compute an encoded version of the partial gradients $g_j\in\R^p$ corresponding to their assignments. Let
$$ \gb \coloneqq {\begin{pmatrix} | & | & & | \\ g_1 & g_2 & \hdots & g_k \\ | & | & & | \end{pmatrix}}^T \in \R^{k\times p} $$
denote the matrix whose rows constitute the transposes of the partial gradients, and the received encoded gradients are the rows of $\Bb_{\I}\gb$, for $\Bb_{\I}\in\{0,1\}^{f\times n}$ the submatrix of $\Bb$ consisting of the rows corresponding to $\I$. The full gradient of the objective (\ref{th_star_pr}) on $\D$ can be recovered by applying $\ab_{\I}$:
$$ g^T=\ab_{\I}^T(\Bb\gb)=\bold{1}_{1\times k}\gb= \sum_{j=1}^kg_j^{T},$$
provided that the encoding matrix $\Bb$ satisfies $\ab_{\I}^T\Bb=\bold{1}_{1\times k}$ for all ${{n}\choose{s}}$ possible index sets $\I$. Note that every partition is sent to $s+1$ servers, and each server will receive at least $\frac{k}{n}(s+1)$ distinct partitions. In sections \ref{bin_gc_sec} and \ref{dec_vector_sec}, we explain the design of our encoding matrix $\Bb$ and decoding vector $\ab_{\I}$, respectively. These may then be used for recovering the gradient $g$ at each iteration by the central server.

In \cite{TLDK17}, for a balanced  assignment, i.e., when all the workers are assigned the same number of tasks, the number of tasks corresponds to the support of the corresponding row of $\Bb$, and is lower bounded by $\|\Bb_{i*}\|_0\geq \frac{k}{n}(s+1)$. When this is met with equality for all rows of $\Bb$, the scheme is maximum distance separable (MDS). The restriction $(s+1)\mid n$ boils down to satisfying this bound, as $\frac{n}{s+1}$ needs to be an integer.

\section{Binary Gradient Coding --- Encoding Matrix}
\label{bin_gc_sec}

The main idea is to work with congruence classes $\bmod(s+1)$ on the set of the workers' indices $\N_n$, in such a way that the workers composing a congruence class are roughly assigned the same number of partitions (differing by no more than one), while all partitions appear exactly once in each class. By \textit{congruence class} we simply mean the set of integers $j\in\N_n$ which are equivalent $\bmod(s+1)$. The classes are denoted by $\left\{[i]_{s+1}\right\}_{i=0}^s$. One could use a random assignment once it has been decided how many partitions each worker is allocated. However, to get a deterministic encoding matrix, we assign the partitions in ``blocks'', i.e., submatrices consisting of only 1's. To simplify the presentation we will assume that $n=k$, though the idea can be easily adapted when $n\neq k$.

Define parameters $\ell$ and $r$ by performing Euclidean division, i.e., $n=\ell\cdot(s+1)+r$ such that $\ell=\floor{\frac{n}{s+1}}$ and $r=n-\ell\cdot(s+1)\equiv n\bmod(s+1)$. Similarly, for the integers $r,\ell$ we have $r=t\cdot\ell+q$. Therefore, $n=\ell\cdot(s+t+1)+q$. In a particular case, we will also need the parameters defined by the division of $n$ and $(\ell+1)$, which we define by $n=\lambda\cdot(\ell+1)+\rt$ (if $\ell=s-r$, then $\lambda=s$). To summarize, we have
\begin{itemize}[noitemsep,nolistsep]
  \item $n=\ell\cdot(s+1)+r \qquad \ 0\leq r<s+1$,
  \item $r=t\cdot\ell+q \qquad \qquad \ \ \ 0\leq q<\ell$,
  \item $n=\lambda\cdot(\ell+1)+\rt \qquad \ 0\leq \rt<\ell+1$,
\end{itemize}
where all terms are nonnegative integers.

In our proposed scheme, the encoding is identical for the classes $\Cfr_1\coloneqq\left\{[i]_{s+1}\right\}_{i=0}^{r-1}$, and is also identical for the classes $\Cfr_2\coloneqq\left\{[i]_{s+1}\right\}_{i=r}^{s}$. A more intuitive way of thinking about our design, is that we want $\Bb$ to be as close to a block diagonal matrix as possible. We refer to each disjoint set of consecutive $s+1$ rows of $\Bb$ as a \textit{block}, and the submatrix comprised of the last $r$ rows as the \textit{remainder block}. Note that in total we have $\ell+1$ blocks, and that each of the first $\ell$ blocks have workers with indices forming a complete residue system. We will present the two assignments (for $\Cfr_1$ and $\Cfr_2$) separately. Also, a certain numerical example, where $n=k=11$ and $s=3$ is presented for clarification.

\subsection{Repetition Assignment for Classes $0$ to $r-1 \ $ --- $\ \Cfr_1$}

In our construction each of the first $r$ residue classes also have an assigned row in the remainder block, such that we \textit{could} assign $r$ partitions to the last worker of each class in $\Cfr_1$, and evenly assign $s+1$ to all other workers corresponding to $\Cfr_1$. Our objective though is to distribute the $n$ tasks among the workers corresponding to the $\ell+1$ blocks as evenly as possible, for the congruence classes corresponding to $\Cfr_1$, in such a way that \textit{homogeneous} workers have similar loads. By homogeneous, we mean the workers have the same computational power, i.e., independent and identically distributed statistics for the computing time of similar tasks.

Note that $n=(\ell+1)\cdot s+(\ell+r-s)$, which implies that when $\ell>s-r$ we can assign $s+1$ tasks to each worker in the first $\ell+r-s$ blocks, and $s$ tasks to the remaining $s+1-r$ blocks. In the case where $\ell\leq s-r$, we assign $\lambda+1$ tasks to the first $\rt$ blocks and $\lambda$ tasks to the remaining $\ell+1-\rt$ blocks. It is worth pointing out that $\lambda=s$ and $\rt=0$ when $\ell=s-r$, which means that every worker corresponding to $\Cfr_1$ is assigned $\lambda=s$ tasks, as $n=(\ell+1)\cdot s$.

For example, for parameters $n=11$ and $s=3$ we get $\ell=2,r=3,t=1,q=1$, thus $\ell>r-s$; and the task allocation for $\Cfr_1$ is described by $\Bb_{\Cfr_1}\in\{0,1\}^{(\ell+1)\cdot r\times n}$:
\setcounter{MaxMatrixCols}{20}
$$ \Bb_{\Cfr_1} = \begin{bmatrix} 
\textbf{\blue 1} & \textbf{\blue 1} & \textbf{\blue 1} & \textbf{\blue 1} & & & & & & & \\
\textit{\cyan 1} & \textit{\cyan 1} & \textit{\cyan 1} & \textit{\cyan 1} & & & & & & & \\
\mathfrak{\purple 1} & \mathfrak{\purple 1} & \mathfrak{\purple 1} & \mathfrak{\purple 1} & & & & & & & \\
& & & & \textbf{\blue 1} & \textbf{\blue 1} & \textbf{\blue 1} & \textbf{\blue 1} & & & \\
& & & & \textit{\cyan 1} & \textit{\cyan 1} & \textit{\cyan 1} & \textit{\cyan 1} & & & \\
& & & & \mathfrak{\purple 1} & \mathfrak{\purple 1} & \mathfrak{\purple 1} & \mathfrak{\purple 1} & & & \\
& & & & & & & & \textbf{\blue 1} & \textbf{\blue 1} & \textbf{\blue 1} \\
& & & & & & & & \textit{\cyan 1} & \textit{\cyan 1} & \textit{\cyan 1} \\
& & & & & & & & \mathfrak{\purple 1} & \mathfrak{\purple 1} & \mathfrak{\purple 1}
\end{bmatrix}, $$
where each congruence class is represented by a different color and font. The indicated dimensions are for the case where $r>0$, i.e., the remainder block is not empty. An explicit implementation is described (using matlab notation) in algorithm \ref{B_ones_unb_unif_C1}, where $\tilde{\Bb}_{\Cfr_1}$ is obtained from $\Bb_{\Cfr_1}$ by properly appending zero vectors. For coherence, we index the rows by $i$ starting from 0, and the columns by $j$ starting from $1$.

\begin{algorithm}[h]
\label{B_ones_unb_unif_C1}
\SetAlgoLined
{\footnotesize
  \KwIn{number of workers $n$ and stragglers $s$, where $s,n\in\Z_+$}
  \KwOut{encoding matrix $\tilde{\Bb}_{\Cfr_1}\in\{0,1\}^{n\times n}$ 
  \Comment{assume $n=k$}}
  $\tilde{\Bb}_{\Cfr_1}\gets\bold{0}_{n\times n}$, and use the division algorithm to get the parameters: \\
  $\ind n=\ell\cdot(s+1) \qquad r=t\cdot\ell+q \qquad n=\lambda\cdot(s+1)+\rt$ \\

  \For{$i=0$ to $r-1$}
  {
    \If{$\ell+r>s$}
    {
      \For{$j=1$ to $\ell+r-s$}
      {
        $\tilde{\Bb}_{\Cfr_1}\Big[(j-1)(s+1)+i,(j-1)(s+1)+1:j(s+1)\Big]=\bold{1}_{1\times(s+1)}$ \\
      }
      \For{$j=\ell+r-s+1$ to $\ell+1$}
      {
        $\tilde{\Bb}_{\Cfr_1}\Big[(j-1)(s+1)+i,(j-1)s+(\ell+r-s)+1:(j-1)s+\ell+r\Big]=\bold{1}_{1\times s}$
      }
    }
    \ElseIf{$\ell+r\leq s$}
    {
      \For{$j=1$ to $\rt$}
      {
        $\tilde{\Bb}_{\Cfr_1}\Big[(j-1)(s+1)+i,(j-1)(\lambda+1)+1:j(\lambda+1)\Big]=\bold{1}_{1\times(\lambda+1)}$ \\
      }
      \For{$j=\rt+1$ to $\ell+1$}
      {
        $\tilde{\Bb}_{\Cfr_1}\Big[(j-1)(s+1)+i,(j-1)\lambda+\rt+1:(j-1)\lambda+\rt+\lambda\Big]=\bold{1}_{1\times\lambda}$
      }
    }
  }
 \Return $\tilde{\Bb}_{\Cfr_1}$
 \caption{Determining $\tilde{\Bb}_{\Cfr_1}$ --- $\Cfr_1=\left\{[i]_{s+1}\right\}_{i=0}^{r-1}$}
}
\end{algorithm}

\subsection{Repetition Assignment for Classes $r$ to $s \ $  --- $\ \Cfr_2$}

For the workers corresponding to $\Cfr_2$, we first check if $q=0$. If this is the case, we distribute evenly the $n$ partitions between the workers to each $i\in\Cfr_2$, i.e., each worker is assigned $(s+t+1)$ partitions; as $n=\ell\cdot(s+t+1)$ and here we are only considering $\ell$ blocks. When $0<q<r$, we assign $(s+t+2)$ tasks to each worker of $\Cfr_2$ in the first $q$ blocks, and $(s+t+1)$ to the workers in the remaining $\ell-q$ blocks.

In the numerical example considered, we have $q=1$ and $\Bb_{\Cfr_2}\in\{0,1\}^{\ell\cdot(s+1-r)\times n}$:
$$ \Bb_{\Cfr_2} = \begin{bmatrix} 
{\dgreen 1} & {\dgreen 1} & {\dgreen 1} & {\dgreen 1} & {\dgreen 1} & {\dgreen 1} & & & & & \\
& & & & & & {\dgreen 1} & {\dgreen 1} & {\dgreen 1} & {\dgreen 1} & {\dgreen 1}
\end{bmatrix}. $$
An explicit implementation is provided in algorithm \ref{B_ones_unb_unif_C2}, where $\tilde{\Bb}_{\Cfr_2}$ is obtained from $\Bb_{\Cfr_2}$ by properly appending zero vectors.

\begin{algorithm}[h]
\label{B_ones_unb_unif_C2}
\SetAlgoLined
{\footnotesize
  \KwIn{number of workers $n$ and stragglers $s$, where $s,n\in\Z_+$}
  \KwOut{encoding matrix $\tilde{\Bb}_{\Cfr_2}\in\{0,1\}^{n\times n}$ 
  \Comment{assume $n=k$}}
  $\tilde{\Bb}_{\Cfr_2}\gets\bold{0}_{n\times n}$, and use the division algorithm to get the parameters: \\
  $\ind n=\ell\cdot(s+1) \qquad r=t\cdot\ell+q \qquad n=\lambda\cdot(s+1)+\rt$ \\

  \For{$i=r$ to $r$}
  {
    \If{$q=0$}
    {
      \For{$j=1$ to $\ell$}
      {
        $\tilde{\Bb}_{\Cfr_2}\Big[(j-1)(s+1)+i,(j-1)(s+t+1)+1:j(s+t+1)\Big]=\bold{1}_{1\times(s+t+1)}$ \\
      }
    }
    \ElseIf{$q>0$}
    {
      \For{$j=1$ to $q$}
      {
        $\tilde{\Bb}_{\Cfr_2}\Big[(j-1)(s+1)+i,(j-1)(s+t+2)+1:j(s+t+1)\Big]=\bold{1}_{1\times(s+t+2)}$ \\
      }
      \For{$j=q+1$ to $\ell$}
      {
        $\Bb\Big[(j-1)(s+1)+i,(j-1)(s+t+1)+q+1:j(s+t+1)+q\Big]=\bold{1}_{1\times(s+t+1)}$
      }
    }
  }
 \Return $\tilde{\Bb}_{\Cfr_2}$
 \caption{Determining $\tilde{\Bb}_{\Cfr_2}$ --- $\Cfr_2=\left\{[i]_{s+1}\right\}_{i=r}^{s}$}
}
\end{algorithm}

The final step is to \textit{combine} the two matrices to get $\Bb$. One could merge the two algorithms into one, or run them separately to get $\Bb=\tilde{\Bb}_{\Cfr_1}+\tilde{\Bb}_{\Cfr_2}$, demonstrated as follows:
$$ \Bb = \begin{bmatrix} 
\textbf{\blue 1} & \textbf{\blue 1} & \textbf{\blue 1} & \textbf{\blue 1} & & & & & & & \\
\textit{\cyan 1} & \textit{\cyan 1} & \textit{\cyan 1} & \textit{\cyan 1} & & & & & & & \\
\mathfrak{\purple 1} & \mathfrak{\purple 1} & \mathfrak{\purple 1} & \mathfrak{\purple 1} & & & & & & & \\
{\dgreen 1} & {\dgreen 1} & {\dgreen 1} & {\dgreen 1} & {\dgreen 1} & {\dgreen 1} & & & & & \\
& & & & \textbf{\blue 1} & \textbf{\blue 1} & \textbf{\blue 1} & \textbf{\blue 1} & & & \\
& & & & \textit{\cyan 1} & \textit{\cyan 1} & \textit{\cyan 1} & \textit{\cyan 1} & & & \\
& & & & \mathfrak{\purple 1} & \mathfrak{\purple 1} & \mathfrak{\purple 1} & \mathfrak{\purple 1} & & & \\
& & & & & & {\dgreen 1} & {\dgreen 1} & {\dgreen 1} & {\dgreen 1} & {\dgreen 1} \\
& & & & & & & & \textbf{\blue 1} & \textbf{\blue 1} & \textbf{\blue 1} \\
& & & & & & & & \textit{\cyan 1} & \textit{\cyan 1} & \textit{\cyan 1} \\
& & & & & & & & \mathfrak{\purple 1} & \mathfrak{\purple 1} & \mathfrak{\purple 1}
\end{bmatrix} \in \{0,1\}^{n\times n} $$

The encoding matrix $\Bb$ is also the adjacency matrix of a bipartite graph $G=(\mathcal{L},\mathcal{R},\mathcal{E})$, where the vertices $\mathcal{L}$ and $\mathcal{R}$ correspond to the $n$ workers and the $k$ partitions, respectively. We can also vary the number of stragglers $s$ the scheme can tolerate for a fixed $n$, by trading the sparsity of $\Bb$. In other words, if $\Bb$ is designed to tolerate more stragglers, then more overall partial gradients need to be computed; as $\|\Bb\|_F^2=|\text{supp}(\Bb)|=k\cdot(s+1)$.

\section{Binary Gradient Coding --- Decoding Vector}
\label{dec_vector_sec}

Another drawback of the binary scheme in \cite{TLDK17} is that it computes and stores all scenarios for decoding vectors in a matrix $\Ab\in\R^{{{n}\choose{f}}\times n}$, where a matrix inversion is required to compute $\Ab$. This matrix needs to be stored and searched through at each iteration of the gradient descent procedure. We propose a more efficient online decoding algorithm.

In any straggler scenario, since there is no rescaling of the partial gradients taking place by encoding with $\Bb$ as the coefficients are $1$ or $0$, decoding is nothing but summing a certain subset of the received encoded tasks, while making sure that no partial gradient is added more than once. Hence, among any $f$ workers who send back their computed sum of partial gradients, there should be $\ell$ workers for $\ell\coloneqq\frac{n}{s+1}\in\Z_+$ (or $\ell+1$ where $\ell=\floor{\frac{n}{s+1}}$, if $(s+1)\nmid n$) which have no common assigned partitions. This is shown next.

If $r=0$, for the decoding vector $\ab_{\I}$ we traverse through the $s+1$ classes in order, to detect one which is a complete residue system (algorithm \ref{determine_aI}). In the case where $r>0$, we first traverse through the last $s+1-r$ congruence classes; checking only the first $\ell$ blocks. If there is not a complete residue system from the received workers, we proceed to the first $r$ classes; checking also the remainder block. This extra step is to make the scheme more efficient. In both cases, by the pigeonhole principle we are guaranteed to have a complete residue system when $f$ tasks are received.

The next step is to devise a decoding vector for each of the ${n}\choose{s}$ different straggler scenarios $\I$. We associate each complete residue system $\ell$-tuple (or $(\ell+1)$)
with a decoding vector $\ab_i$ 
$$ \ab_i\coloneqq \sum_{j\in[i]_{\ell}}\eb_j \ \in\{0,1\}^{1\times n}, $$
for $i\in\N_{\ell}-1$, where $\eb_j$ denotes the $j^{th}$ standard basis vector of $\R^n$. Also, note that $\|\ab_i\|_{0}=\ell+1$ for the decoding vectors corresponding to the first $r$ classes, and $\|\ab_i\|_{0}=\ell$ for the remaining classes. In both cases, $\ab_{i+1}$ is a cyclic shift of $\ab_i$.

At each iteration the gradient is computed once $f$ worker tasks are received. Define the \textit{received-indicator vectors}
$$ \left(\text{rec}_{\I}\right)_i = \begin{cases} 1 \qquad \text{ if } i\in\I \\ 0 \qquad \text{ if } i\not\in\I \end{cases}, $$
for each possible $\I$, where $\|\text{rec}_{\I}\|_0=f$ and $n-\|\text{rec}_{\I}\|_0=s$. Thus, there is at least one $i\in\N_{\ell}-1$ for which $\text{supp}(\ab_i)\subsetneq \text{supp}(\text{rec}_{\I})$. If there are multiple $\ab_i$ satisfying this property, any one can be selected. The pseudocode is presented in algorithm \ref{determine_aI}.

\begin{algorithm}[h]
\label{determine_aI}
\SetAlgoLined
\KwIn{received indicator-vector $\text{rec}_{\I}$}
\KwOut{decoding vector $\ab_{\I}$}
\If{r=0}
{
 \For{$i = 0$ to $s$}
  {
   \If {$\left(\mathrm{rec}_{\I}\right)_i = 1$}
    {
     $l\gets i$ \\
     \If {$\mathrm{supp}(\ab_l)\subseteq \mathrm{supp}(\mathrm{rec}_{\I})$}
     {
      $\ab\gets \ab_l$ \\
      break
     }
    }
  }
}
\ElseIf{$r>0$}
{run the above for-loop \textbf{for} $i=r \textit{ to } s$\\ and then \textbf{for} $i=0 \textit{ to } r-1$}
 \Return $\ab_{\I}\gets\ab$
 \caption{Determining $\ab_{\I}$}
\end{algorithm}

\begin{Thm}
The gradient coding scheme comprised of $\Bb$ and $\ab_{\I}$ based on algorithms \ref{B_ones_unb_unif_C1},\ref{B_ones_unb_unif_C2},\ref{determine_aI}, is robust to $s$ stragglers.
\end{Thm}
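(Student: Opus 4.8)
The plan is to reduce the claim to a combinatorial tiling statement about $\Bb$ and then close it with a pigeonhole argument. Recall from Section~\ref{GC} that, with $k=n$, recovery from a received set $\I$ succeeds as soon as the decoder produces a vector $\ab_{\I}$ with $\ab_{\I}^T\Bb=\bold{1}_{1\times n}$, since then $\ab_{\I}^T(\Bb\gb)=\bold{1}_{1\times n}\gb=\sum_{j=1}^n g_j^T=g^T$. Because every entry of $\Bb$ and of the candidate $\ab_{\I}$ lies in $\{0,1\}$, we have $\ab_{\I}^T\Bb=\sum_{j\in\mathrm{supp}(\ab_{\I})}\Bb_{j*}$, so the identity $\ab_{\I}^T\Bb=\bold{1}_{1\times n}$ holds exactly when the column supports of the chosen rows partition $\N_n$ (each partition index covered once and only once). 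Thus it suffices to exhibit, inside any received set $\I$ of size $f=n-s$, a family of rows whose supports tile $\N_n$.

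First I would establish the tiling property of the construction: for each congruence class $[i]_{s+1}$, the supports of the rows of $\Bb$ indexed by the workers of that class are pairwise disjoint with union $\N_n$, i.e. $\sum_{j\in[i]_{s+1}}\Bb_{j*}=\bold{1}_{1\times n}$. By construction in Algorithms~\ref{B_ones_unb_unif_C1} and~\ref{B_ones_unb_unif_C2} each such row has support equal to a consecutive block of columns, and within a fixed class these blocks are written left to right with each block starting one column past the end of the previous one; hence it is enough to verify, in each branch, that consecutive blocks abut and that the block lengths sum to $n$. This must be done in four branches: for $\Cfr_1$, the case $\ell+r>s$ ($\ell+r-s$ blocks of length $s+1$ and $s+1-r$ of length $s$) and the case $\ell+r\leq s$ ($\rt$ blocks of length $\lambda+1$ and $\ell+1-\rt$ of length $\lambda$); and for $\Cfr_2$, the case $q=0$ (all $\ell$ blocks of length $s+t+1$) and the case $q>0$ ($q$ blocks of length $s+t+2$ and $\ell-q$ of length $s+t+1$). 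The length totals collapse to $n$ via $n=\ell(s+1)+r$, $n=(\ell+1)\lambda+\rt$, and $n=\ell(s+t+1)+q$ respectively, giving the tiling for every class.

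With tiling in hand, the existence part is a pigeonhole count: the $s+1$ classes $\{[i]_{s+1}\}_{i=0}^{s}$ partition the $n$ workers, and at most $s$ indices are stragglers, so some class $[i^\star]_{s+1}$ has all of its workers in $\I$. Setting $\ab_{\I}=\sum_{j\in[i^\star]_{s+1}}\eb_j$ then yields $\ab_{\I}^T\Bb=\bold{1}_{1\times n}$ by tiling and $\mathrm{supp}(\ab_{\I})\subseteq\I=\mathrm{supp}(\mathrm{rec}_{\I})$, which recovers $g$. It remains to check that Algorithm~\ref{determine_aI} actually returns such a vector: for $r>0$ it scans classes $r$ to $s$ (each spanning the $\ell$ full blocks) and then classes $0$ to $r-1$ (additionally using the remainder block), testing precisely the containment $\mathrm{supp}(\ab_l)\subseteq\mathrm{supp}(\mathrm{rec}_{\I})$; since a straggler-free class is guaranteed to exist, the scan cannot terminate without selecting one.

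The step I expect to demand the most care is the block arithmetic in the tiling claim, namely verifying in each of the four branches that the column offsets produced by Algorithms~\ref{B_ones_unb_unif_C1} and~\ref{B_ones_unb_unif_C2} leave no gap or overlap and exhaust all $n$ columns; the pigeonhole and the correctness of Algorithm~\ref{determine_aI} are then immediate. I would also note explicitly that the $\ell$-versus-$(\ell+1)$ asymmetry in class sizes when $r>0$ is exactly what the two-phase ordering of Algorithm~\ref{determine_aI} is designed to accommodate, and that nowhere does the argument invoke $(s+1)\mid n$.
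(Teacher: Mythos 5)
Your proof is correct and takes essentially the same route as the paper's: a pigeonhole argument over the $s+1$ congruence classes, which partition the $n$ workers, so that $s$ stragglers can touch at most $s$ classes and some class must be received in full. The paper's proof consists only of the count $\nu=\ell\cdot r+(\ell-1)\big[(s+1)-r\big]+1=n-s$ and leaves implicit both the tiling property $\sum_{j\in[i]_{s+1}}\Bb_{j*}=\bold{1}_{1\times n}$ and the correctness of algorithm \ref{determine_aI}, so your explicit verification of those two steps (with the correct length identities in all four branches) is added detail rather than a different argument.
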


\begin{proof}
By the pigeonhole principle we require
\begin{align*}
  \nu &\coloneqq \ell\cdot r+(\ell-1)\cdot\big[(s+1)-r\big]+1\\
  &= \ell\cdot(s+1)-s+r = \big[\ell\cdot(s+1)+r\big]-s=n-s
\end{align*}
workers to send their task, which implies the scheme is robust to $s$ stragglers.
\end{proof}

Note that the total number of task assignments is $k\cdot (s+1)$, for any pair $(s,n)$ where $s<n$, as expected. This is the same total load required in the MDS based schemes, which directly relates to the bound mentioned in \ref{GC}.

We point out that a similar decoding appears in \cite{CP18}, which deals with approximating the gradient. The runtime of algorithm \ref{determine_aI} is $O((\ell+1)\cdot(s+1))$. With the modification of breaking out of the for-loop early by only traversing through the classes $0,\cdots,s-1$, and assigning $\ab_{\I}\gets\ab_{s}$ if none was selected, the runtime is reduced to $O((\ell+1)\cdot s)$. This is significantly faster than the decoding algorithm of \cite{TLDK17}, as the size of the corresponding decoding matrix $\Ab$ grows exponentially in $n$ (unless $s$ is fixed, for which ${n}\choose{s}$ is polynomial in $n$). 

\section{Close to Uniform Assignment Distribution}
\label{cl_un_ass_distr_sec}

A drawback of the proposed scheme is that the load assignments can have a wide range depending on how small $r$ is compared to $s+1$. This is due to the lighter load assigned to workers in the remainder block of size $r$. This can be regarded as the cost for dropping the assumption $(s+1)\mid n$, which does not often hold for a pair of two random positive integers --- for fixed $n$ and random $s<n$; it holds with probability $\frac{\sigma_0(n)-2}{n}$, where $\sigma_0$ the divisor function of the $0^{th}$ power.


By our construction, we assign tasks to the servers such that the difference in the load assigned to any two workers within the same set of classes $\Cfr_1$ or $\Cfr_2$ is at most one. This is evident in algorithms \ref{B_ones_unb_unif_C1} and \ref{B_ones_unb_unif_C2}.

\begin{Def}
\label{def_unif}
Define $d_s(\Bb)\coloneqq\sum\limits_{i=1}^n\left|\|\Bb_{i*}\|_0-\frac{k}{n}(s+1)\right|$ for $\Bb\in\N_0^{n\times k}$, where $\N_0\coloneqq\{0,1,2,\cdots\}$. This distance measures how far the task allocations $\left\{\|\Bb_{i*}\|_0\right\}_{i=1}^n$ are from being \textbf{uniform}, i.e., $\|\Bb_{i*}\|_0= \floor{\frac{k}{n}(s+1)+\frac{1}{2}}$ for all $i\in\N_n$. Furthermore, $\{\|\Bb_{i*}\|_0\}_{i=1}^n$ is uniform; i.e., all elements are equal, if and only if $d_s(\Bb)=0$.
\end{Def}

\begin{Prop}
\label{prop_optimality}
The task allocation through $\Bb$ resulting from algorithms \ref{B_ones_unb_unif_C1} and \ref{B_ones_unb_unif_C2} is a solution to the optimization problem:
$$ \mathrm{(IP)} \qquad \min_{\Bb\in\N_0^{n\times k}}\big\{d_s(\Bb)\big\}, $$
such that $\sum\limits_{i=1}^n\|\Bb_{i*}\|_0=k\cdot(s+1)$.
\end{Prop}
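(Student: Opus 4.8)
The plan is to reduce the matrix-level problem to a one-dimensional problem on the load vector $\mathbf{d}=(\|\Bb_{1*}\|_0,\dots,\|\Bb_{n*}\|_0)$ and to exploit the congruence-class structure that robustness forces on any admissible $\Bb$. First I would record the structural constraint. Since the total support is $k(s+1)$ and any scheme robust to $s$ stragglers that decodes over $\{0,1\}$ must replicate every partition at least $s+1$ times, and since $k$ columns each of weight $\geq s+1$ summing to $k(s+1)$ forces each column to have weight exactly $s+1$, every partition is assigned to exactly $s+1$ workers. Because the $n$ workers split into the $s+1$ residue classes $\{[i]_{s+1}\}_{i=0}^s$ and each partition occurs at most once inside a class (so that a single surviving class is a complete residue system, as used in the pigeonhole argument), each partition occurs exactly once per class. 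Hence the loads inside any class sum to $k$; this is the only property of the feasible set I will use.

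Next I would decompose the objective along the classes. Writing $\mu:=\frac{k}{n}(s+1)$ and grouping the sum defining $d_s(\Bb)$ by residue class, the triangle inequality gives, for a class $C$ of size $m$ whose loads sum to $k$, the bound $\sum_{j\in C}\big|\|\Bb_{j*}\|_0-\mu\big|\geq\big|k-m\mu\big|$. With $n=k$ we have $\mu=s+1$, and the $s+1$ classes have sizes $\ell+1$ (the $r$ classes of $\Cfr_1$) or $\ell$ (the $s+1-r$ classes of $\Cfr_2$). Substituting $n=\ell(s+1)+r$ yields the per-class bounds $\big|k-(\ell+1)(s+1)\big|=s+1-r$ for each $\Cfr_1$ class and $\big|k-\ell(s+1)\big|=r$ for each $\Cfr_2$ class, so that summing over all classes gives $d_s(\Bb)\geq r(s+1-r)+(s+1-r)r=2r(s+1-r)$ for every feasible $\Bb$.

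It then remains to show that the construction attains this bound, i.e. that every per-class triangle inequality is tight, which happens exactly when the deviations $\|\Bb_{j*}\|_0-\mu$ inside a class all share one sign. I would verify directly from algorithm \ref{B_ones_unb_unif_C1} that in the branch $\ell+r>s$ the $\Cfr_1$ loads lie in $\{s,s+1\}$, and in the branch $\ell+r\leq s$ they lie in $\{\lambda,\lambda+1\}$ with $\lambda\leq s$ (since then $n/(\ell+1)\leq s$), so in either case all $\Cfr_1$ loads are $\leq s+1$; and from algorithm \ref{B_ones_unb_unif_C2} that the $\Cfr_2$ loads lie in $\{s+t+1,s+t+2\}$, hence are all $\geq s+1$. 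Thus each class meets its bound, $d_s(\Bb)=2r(s+1-r)$, and $\Bb$ solves (IP).

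The main obstacle is the first step. The total-support constraint \emph{alone} does not single out the construction, as an exactly uniform all-$(s+1)$ allocation would give $d_s=0$; the argument must therefore legitimately invoke the robustness and binary-tiling structure to obtain the within-class sum constraint, and pinning down that this structure is genuinely forced on every admissible binary scheme, rather than merely satisfied by ours, is the delicate point. The remaining steps are routine: I would keep the regimes $\ell+r>s$ and $\ell+r\leq s$ separate when checking tightness, and note that only the sign of the within-class deviations, not full balancing, is needed for $d_s$-optimality, the balancing being dictated by the secondary homogeneity objective.
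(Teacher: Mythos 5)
The paper states Proposition \ref{prop_optimality} without any proof, so there is nothing to compare your argument against; I can only assess it on its own terms. Your skeleton --- a per-class lower bound via the triangle inequality, $\sum_{j\in C}\big|\|\Bb_{j*}\|_0-(s+1)\big|\geq |k-|C|(s+1)|$, summed over classes to get $d_s(\Bb)\geq 2r(s+1-r)$, followed by a tightness check that all deviations in a $\Cfr_1$ class are $\leq 0$ and all deviations in a $\Cfr_2$ class are $\geq 0$ --- is correct, and your arithmetic checks out (e.g.\ for $n=k=11$, $s=3$ the construction gives $d_s=6=2\cdot 3\cdot 1$). That part would be the right proof under the intended reading of (IP).

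The genuine gap is the one you flag yourself, and it is fatal to the proof as written rather than merely delicate. The feasible set of (IP) as literally stated is all $\Bb\in\N_0^{n\times k}$ with $\sum_i\|\Bb_{i*}\|_0=k(s+1)$; since $n=k$ makes $\tfrac{k}{n}(s+1)=s+1$ an integer, the uniform assignment of $s+1$ partitions per worker is feasible and attains $d_s=0$, so your lower bound $2r(s+1-r)>0$ (for $0<r<s+1$) cannot hold on that set, and the proposition is false unless the constraint set is implicitly restricted to valid gradient codes. Your patch --- each partition appears exactly once in each residue class, hence within-class loads sum to $k$ --- is exactly what is needed, but your derivation of it is circular: the residue classes are an artifact of \emph{this} construction, and the claim that ``robustness with a binary decoding vector forces each partition to occur at most once per class'' is neither proven nor even well-posed for an arbitrary feasible $\Bb$. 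What robustness actually forces is that every $(n-s)$-subset of workers contains an exact cover of $\N_k$; getting from there to a partition of the workers into $s+1$ disjoint exact covers of prescribed sizes $\ell$ and $\ell+1$ (a max-packing versus min-hitting-set issue) is a nontrivial combinatorial step that your proof does not supply. To be fair, this looseness originates in the paper's own formulation of (IP); but a complete proof must either add the structural constraints to the statement of the optimization problem or derive them from the robustness requirement, and yours does neither.
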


Proposition \ref{prop_optimality} holds for permutations of the columns of $\Bb$, or a random assignment of tasks per class; as opposed to repeating blocks --- as long as all partitions are present only once in a single worker of each congruence class. The decoding in either of these cases remains the same. Moreover, the solution to (IP) is unique up to a permutation of rows and columns of $\Bb$.

\subsection{Distribution of Assignments for $n\geq s^2$}
\label{n_gr_s2_subsec}

Considering the identities from Section \ref{bin_gc_sec}, note that for $\ell>r$ we have $t=0$ and $r=q$. Furthermore, when $\ell=s$ we have $n=s\cdot(s+1)+r\simeq s^2$, and in the regime $n\geq s^2$, there are only three values for $n$ for which $t=1$. Consequently, for $\ell+r>s$, the difference in the number of allocated partitions to each worker will not exceed 3.

\begin{Lemma}
Let $n=s^2+a$ for $a\in \N_0$ and $s<n$. Then, we have $t=1$ only when $a=s-2,s-1$ or $2s$. Otherwise, $t=0$.
\end{Lemma}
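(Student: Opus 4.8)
The plan is to reduce everything to the single observation that $t=\floor{r/\ell}$, which follows directly from the defining division $r=t\cdot\ell+q$ with $0\le q<\ell$. Since $r$ is a remainder modulo $s+1$ we always have $0\le r\le s$, so the value of $t$ is controlled entirely by the size of $\ell$ relative to $r$. Thus the first step is to pin down $\ell=\floor{n/(s+1)}$ and $r=n-\ell\cdot(s+1)$ as explicit functions of $a$, and then simply read off $\floor{r/\ell}$ in each regime.

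To compute $\ell$ I would use the identity $s^2=(s-1)(s+1)+1$, so that $n=s^2+a=(s-1)(s+1)+(a+1)$. This shows that as long as $0\le a\le s-1$ (so that $1\le a+1\le s<s+1$) the division is already in reduced form, giving $\ell=s-1$ and $r=a+1$. Increasing $a$ past $s-1$ rolls the quotient over: for $s\le a\le 2s$ one writes $n=s(s+1)+(a-s)$ with $0\le a-s\le s<s+1$, so $\ell=s$ and $r=a-s$; and for $a\ge 2s+1$ one has $n\ge(s+1)^2$, forcing $\ell\ge s+1$. Note that the hypothesis $n\ge s^2$ is exactly what guarantees $\ell\ge s-1$, so these three regimes exhaust all cases.

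With these regimes in hand, the conclusion is a short case check of $t=\floor{r/\ell}$. When $\ell=s-1$, since $r=a+1\le s<2(s-1)$ (using $s\ge 3$) the quotient is $0$ or $1$, and equals $1$ exactly when $r\ge s-1$, i.e. $a\in\{s-2,s-1\}$. When $\ell=s$ we have $r=a-s\le s$, so $t=1$ precisely when $r=s$, i.e. $a=2s$, and $t=0$ otherwise. Finally, when $\ell\ge s+1$ we have $r\le s<\ell$, hence $\floor{r/\ell}=0$. Collecting the three regimes yields $t=1$ exactly for $a\in\{s-2,s-1,2s\}$ and $t=0$ elsewhere.

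The computations are entirely elementary, so the only real care needed is bookkeeping at the regime boundaries --- namely checking that each claimed pair $(\ell,r)$ satisfies $0\le r\le s$ so that it is the genuine Euclidean division, and confirming the inequality $r<2(s-1)$ that keeps $t\le 1$ in the first regime (this is where $n\ge s^2$, equivalently $\ell\ge s-1$, together with $s\ge 3$, enters). I expect no conceptual obstacle; the main risk is an off-by-one error in identifying which values of $a$ push $r$ up to $s-1$, up to $s$, or past the $s+1$ threshold, so I would double-check each endpoint explicitly.
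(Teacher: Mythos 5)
Your proof is correct and follows essentially the same route as the paper: compute the Euclidean division of $n=s^2+a$ by $s+1$ in the three regimes $0\le a\le s-1$, $s\le a\le 2s$, and $a\ge 2s+1$, and read off $t=\floor{r/\ell}$ in each. If anything, your write-up is slightly tighter in that it makes explicit the boundary inequality $r<2(s-1)$ (hence the implicit assumption $s\ge 3$) that keeps $t\le 1$ in the first regime, which the paper's case split glosses over.
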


\begin{proof}
We break up the proof into three cases:

\noindent \underline{Case $a\in\{0,\cdots,s-3\}$}: For $\alpha = s-a\in\{3,4,\cdots,s\}$:
$$ n = s\cdot(s+1)-\alpha = \overbrace{(s-1)}^{\ell}\cdot(s+1)+\overbrace{(s+1-\alpha)}^r, $$
and $\ell>r$ for any $\alpha$. Thus, $t=0$ and $r=q$.

\noindent \underline{Case $a\in\{s,\cdots,2s-1\}$}: Let $n=s^2+a=s^2+(s+\beta)$ for $\beta\in\{0,\cdots,s-1\}$. Then $n=s\cdot(s+1)+\beta$ implies $\ell=s$ and $r=\beta$. Since $r<\ell$, it follows that $t=0$ and $r=q$.

\noindent \underline{Case $a\gneq 2s$}: The final case to consider is $a=2s+\gamma$, for $\gamma\in\Z_+$. The resulting parameters are $r=q=\text{rem}\big(\text{rem}\big(\gamma,s+1\big)-1,s+1\big)$, $\ell=(s^2+2s+\gamma-r)/(s+1)$ and $t=0$, where rem$(\cdot,\cdot)$ is the remainder function.

When $\alpha=1$ it follows that $r=s$ and $\ell=s-1$, thus $t=1$ and $q=1$. For $\alpha=2$ we get $r=\ell=s-1$, hence $t=1$ and $q=0$. For both $\alpha=1$ and $\alpha=2$; $t=1$ is a consequence of $r\geq \ell$. In addition, when $\beta=s$ we have $r=\ell=s$; thus $t=1$ and $q=0$.

\end{proof}

\section{Allocation for Heterogeneous Workers}
\label{heter_case_sec}

For homogeneous workers, we allocated the partitions as uniform as possible, according to Definition \ref{def_unif}. In this section we discuss how to allocate the partitions when the workers are of \textit{heterogeneous} nature, i.e., when they have different computational power. This should be done in such a way that all workers have the same expected execution time; as the stragglers are assumed to be uniformly random. We present the case where we have two groups of machines, each consisting of the same type. The analysis for more groups is analogous. Similar ideas appear in \cite{TLDK17, OGU19}, in different contexts.

The two types of workers are denoted by $\T_i$; with a total of $\tau_i$ machines, and their expected execution for computing $g_j$ (for equipotent $\D_j$'s) by
$$ t_i\coloneqq\E\left[\text{time for } \T_i \text{ to compute } g_j\right], $$
for $i\in\{1,2\}$, where $t_1\lneq t_2$; i.e., machines $\T_1$ are faster. Let $|\J_{\T_i}|$ denote the number of partitions each worker of $\T_i$ is assigned. The goal is to find $|\J_{\T_1}|$ and $|\J_{\T_2}|$ so that
\begin{equation}
\label{expecation_condition}
  \E\left[\T_1 \text{ compute their task}\right]=\E\left[\T_2 \text{ compute their task}\right],
\end{equation}
implying $t_1\cdot|\J_{\T_1}|=t_2\cdot|\J_{\T_2}|$. Hence $|\J_{\T_1}|\gneq|\J_{\T_2}|$, as $t_1\lneq t_2$. Let $\tau_1=\frac{\alpha}{\beta}\cdot\tau_2$ for $\frac{\alpha}{\beta}\in\Q_+$ in reduced form. Since $\tau_1+\tau_2=n$, it follows that
$$ \tau_1=\frac{\alpha}{\alpha+\beta} n \qquad \text{ and } \qquad \tau_2=\frac{\beta}{\alpha}\tau_1=\frac{\beta}{\alpha+\beta} n. $$
To simplify the presentation of the task assignments, we assume $(s+1)\mid n$. If $(s+1)\nmid n$, one can follow a similar approach to that presented in section \ref{bin_gc_sec} to obtain a close to uniform task allocation; while \textit{approximately} satisfying \eqref{expecation_condition}.

The main idea is to fully partition the data across the workers, such that each congruence class is comprised of roughly $\frac{\alpha}{\alpha+\beta}\cdot\frac{k}{s+1}$ workers of type $\T_1$, and $\frac{\beta}{\alpha+\beta}\cdot\frac{k}{s+1}$ workers of type $\T_2$. We want $\frac{\tau_1+\tau_2}{s+1}=\frac{n}{s+1}$ many workers for each congruence class, and
$$ |\J_{\T_1}|\cdot\frac{\tau_1}{s+1}+|\J_{\T_2}|\cdot\frac{\tau_2}{s+1}=k $$
partitions to be assigned to each class. That is, the dataset $\D$ is completely distributed across each congruence class, and our gradient coding scheme is designed accordingly.

Putting everything together, the following  conditions determine $|\J_{\T_1}|$ and $|\J_{\T_2}|$
\begin{enumerate}[label=(\roman*)]
  \item $t_1\cdot|\J_{\T_1}|=t_2\cdot|\J_{\T_2}| \quad \iff \quad |\J_{\T_2}| = \frac{t_1}{t_2}\cdot|\J_{\T_2}|$
  \item $|\J_{\T_1}|\cdot\tau_1+|\J_{\T_2}|\cdot\tau_2=(s+1)\cdot k$
  \item $\tau_2=\frac{\beta}{\alpha}\cdot\tau_1 \quad \iff \quad \tau_1=\frac{\alpha}{\beta}\cdot\tau_2$.
\end{enumerate}
By substituting (iii) into (ii) to solve for $|\J_{\T_2}|$, and then plugging it into (i) to solve for $|\J_{\T_1}|$, we get
\begin{itemize}
  \item $|\J_{\T_1}| = (s+1)\cdot k \cdot \left(\frac{\alpha t_2}{\alpha t_2+\beta t_1}\right)\cdot\frac{1}{\tau_1}$
  \item $|\J_{\T_2}| = (s+1)\cdot k \cdot \left(\frac{\beta t_1}{\alpha t_2+\beta t_1}\right)\cdot\frac{1}{\tau_2}$.
\end{itemize}
which we round to get appropriate numbers of assignments.

This framework may be generalized to any number of different groups of machines. Under the same assumptions, for $\T_1,\cdots,\T_m$ different groups with $t_i\lneq t_{i+1}$ for all $i\in\N_{m-1}$:
\begin{enumerate}[label=(\roman*)]
  \item $t_1\cdot|\J_{\T_1}|=t_2\cdot|\J_{\T_2}|=\cdots=t_m\cdot|\J_{\T_m}|$
  \item $|\J_{\T_1}|\cdot\tau_1+|+|\J_{\T_2}|\cdot\tau_2+\cdots+|\J_{\T_m}|\cdot\tau_m=(s+1)\cdot k$
  \item $\tau_1=\frac{\alpha_2}{\beta_2}\cdot\tau_2=\cdots=\frac{\alpha_m}{\beta_m}\cdot\tau_m$, for $\frac{\alpha_{i+1}}{\beta_{i+1}}\in\Q_+$
\end{enumerate}
need to be met. This gives us a system of $2(m-1)+1=2m-1$ equations with $m$ unknowns $\{|\J_{\T_j}|\}_{j=1}^m$, which is solvable.

\section{Acknowledgement}
This work was partially supported by grant ARO W911NF-15-1-0479.


\bibliographystyle{IEEEtran}
\bibliography{refs_all}

\end{document}